\documentclass[conference,a4paper]{IEEEtran}
\usepackage[T1]{fontenc}

\makeatletter
\def\endthebibliography{%
	\def\@noitemerr{\@latex@warning{Empty `thebibliography' environment}}%
	\endlist
}
\makeatother

\IEEEoverridecommandlockouts
\usepackage{cite}
\usepackage{xcolor}
\usepackage{amsmath,amssymb,amsfonts}
\usepackage{algorithm}
\usepackage{algorithmic}
\usepackage{etoolbox}  
\usepackage{bbm}
\usepackage{graphicx}
\usepackage{amsmath}
\usepackage{amssymb}
\usepackage{amsthm}
\usepackage{bm}
\usepackage{gensymb}
\usepackage{booktabs}
\usepackage{subfig}
\usepackage{epstopdf}
\DeclareMathOperator*{\argmin}{arg\,min\,}

\makeatletter
\patchcmd{\algorithmic}{\addtolength{\ALC@tlm}{\leftmargin} }{\addtolength{\ALC@tlm}{\leftmargin}}{}{}
\makeatother

\makeatletter
\newcommand\fs@betterruled{%
	\def\@fs@cfont{\bfseries}\let\@fs@capt\floatc@ruled
	\def\@fs@pre{\vspace*{5pt}\hrule height.8pt depth0pt \kern2pt}%
	\def\@fs@post{\kern2pt\hrule\relax}%
	\def\@fs@mid{\kern2pt\hrule\kern2pt}%
	\let\@fs@iftopcapt\iftrue}
\floatstyle{betterruled}
\restylefloat{algorithm}
\makeatother

\definecolor{BuGn}{RGB}{28,144,153}
\newcommand{\ilm}{}
\usepackage{filecontents}
\usepackage{graphicx}
\usepackage{textcomp}
\usepackage{xcolor}
\def\BibTeX{{\rm B\kern-.05em{\sc i\kern-.025em b}\kern-.08em
		T\kern-.1667em\lower.7ex\hbox{E}\kern-.125emX}}

\newtheorem{thm}{Theorem}

\newtheorem{prop}[thm]{Proposition}

\begin{document}
\bstctlcite{IEEEexample:BSTcontrol}
\title{Inter-plane satellite matching in dense LEO constellations}

\author{Beatriz~Soret, Israel~Leyva-Mayorga, \ilm{and} Petar Popovski\\
Department of Electronic Systems, Aalborg University, 9220, Aalborg, Denmark\\
Email:\{bsa, ilm, petarp\}@es.aau.dk
}

\maketitle
\begin{abstract}
Dense constellations of Low Earth Orbit (LEO) small satellites are envisioned to make extensive use of the inter-satellite link (ISL). Within the same orbital plane, the inter-satellite distances are preserved and the links are rather stable. In contrast, the relative motion between planes makes the inter-plane ISL challenging. In a dense set-up, each spacecraft has several satellites in its coverage volume, but the time duration of each of these links is small and the maximum number of active connections is limited by the hardware. We analyze the matching problem of connecting satellites using the inter-plane ISL for unicast transmissions. We present and evaluate the performance of two solutions to the matching problem with any number of orbital planes and up to two transceivers: a heuristic solution with the aim of minimizing the total cost; and a Markovian solution to maintain the on-going connections as long as possible. The Markovian algorithm reduces the time needed to solve the matching up to $1000\times$
and $10\times$ 
with respect to the optimal solution
and to the heuristic solution, respectively, 
without compromising the total cost. 
Our model includes power adaptation and optimizes the network energy consumption as the exemplary cost in the evaluations, but any other QoS-oriented KPI can be used instead. \end{abstract}


\IEEEpeerreviewmaketitle

\section{Introduction}

Low Earth Orbit (LEO) dense constellations of small satellites, based on the CubeSat architecture, 
have become an attractive solution for Internet of Things (IoT) applications in 5G~\cite{TR38.913}.
The constellation is composed of hundreds of spacecrafts plus several ground stations, working all together as a relay communication network. The space segment is organized in several orbital planes that can be deployed at different inclinations and altitudes \cite{Walker1984} \cite{Walker1971}. 
The satellites are connected to each other via the Inter-Satellite Links (ISL), a two-way connection. The ISL can be intra-plane ISL, connecting with the satellite in front and the satellite behind in the same plane; and inter-plane ISL, connecting satellites from different orbital planes. In addition, the satellites are connected to ground stations, gateways or end-devices through the Ground-to-Satellite Link (GSL).
LEO satellites move at speeds $>25\,000$~km/h relative to the ground terminals. Therefore, the GSL is only available for a few minutes before handover to another satellite occurs. 

The use of the ISL unleashes the true potential of a LEO constellation, ensuring continuous connectivity, and reducing the number of required ground stations and the end-to-end latency. 
One example of application is to use the constellation as a relay network, which can dramatically increase the coverage of machine-type communication (MTC) and IoT deployments in rural or remote areas, where the cellular and other relaying networks are out of range~\cite{TR38.913}.

Inter-satellite distances are usually preserved within a plane. However, inter-satellite distances between different planes are time-variant: longest when satellites are over the Equator, and shortest over the polar region boundaries. Moreover, the orbital periods are different if the planes are deployed at different altitudes, or if these contain a different number of satellites, which results in aperiodic topologies. In a dense set-up, each spacecraft has several inter-plane satellites in its coverage volume, which leads to a matching problem of who should communicate to whom. 

Although less investigated than the GSL, several works have addressed the communication challenges of the ISL. The authors in \cite{Radhakrishnan2016} provide a thorough compilation of the latest research efforts in the area of inter-satellite communications, organized in physical, data and network layer. \cite{Qu2017} describes the main use cases and elements of a LEO constellation for IoT, including the use of the ISL. In~\cite{Popescu2017}, a power budget analysis for CubeSats that includes the ISL is conducted. \cite{Cowley2009} addresses the communication among a group of independent satellites in an unstructured constellation, treating the spacecraft positions as random variables. 

Matching problems are among the most important problems in network optimization~\cite{Conforti2014}. For unmanned aerial vehicles (UAVs), \cite{Liu2019} investigates the assignment problem in a Flying Ad-Hoc Network composed of drones, formulating a dynamic matching game that uses the trajectory of the drones. In this paper, we address the matching problem of finding the inter-plane ISL connections that minimize the total cost of the constellation at each time instant. The model includes power adaptation and the power consumption is the exemplary cost, but any other QoS-oriented Key Performance Indicator (KPI) can be optimized instead. Differently than \cite{Cowley2009}, we address a planned network and solve the combinatorial problem by considering the predictability of the spacecrafts positions. 


Specifically, we aim to solve the inter-plane matching problem with $M$ orbital planes and for up to two simultaneous ISLs per satellite. The Hungarian algorithm~\cite{Munkres1957} is known to find the optimal pairing in bipartite graphs, which corresponds to the case with only $M=2$ and one ISL. Furthermore, its computational cost is high. Conversely, we take a network-wise approach and propose two novel algorithms that provide a near-optimal solution to the matching problem with any $M$ and up to two simultaneous ISLs per satellite without compromising the total cost. In addition, the computational complexity of our two algorithms is  much lower than that of the Hungarian algorithm.

The rest of the paper is organized as follows. In \mbox{Section \ref{sec:systemmodel}}, we describe the system model. Sections \ref{sec:1modem} and \ref{sec:2modems} address the problem when the CubeSat is equipped with one and two transceivers, respectively.  Section~\ref{sec:results} presents the performance results and  Section \ref{sec:conclusions} the conclusions. 

\section{System model} \label{sec:systemmodel}

\subsection{Geometry}
The constellation is composed of $N$ satellites distributed in $M$ circular orbital planes. Planes $m=1,2,\dotsc, M$ are composed of $N_m$ evenly distributed satellites, and each orbital plane is defined by the altitude $h_m$, the inclination $\epsilon_m$ and the orbital period $T_m$. Each of the $N = \sum_m N_m$ satellites in the constellation is assigned an index $i\in\{1, 2,\dotsc, N\}$ that serves as a unique identifier. $\mathcal{P}(i)$ is the set of satellites in the same orbital plane as $i$. The function ${\mathtt{p}(\cdot)}$ gives the plane of a satellite. If the number of satellites per plane is the same for all planes ($N_1 = N_2 = ...$), then 
\begin{equation}
\mathtt{p}(i)= \left \lfloor \frac{ {i-1}}{N_1}\right \rfloor + 1 
\end{equation}

Orbits with a low inclination are called equatorial or near equatorial orbits, and polar orbits are those passing above or nearly above both poles on each revolution (i.e., $\epsilon_m$ close to $\pi/2$). There are two classical topologies: the Walker star or polar \cite{Walker1984}, and the Walker $\delta$ or Rosette \cite{Walker1971} 
\cite{Ballard1980}. Without loss of generality, the results of this study are obtained for a Walker $\delta$ constellation like the one shown in \figurename~\ref{fig:constellation}, whose specific parameters are given in Section~\ref{sec:results}.

 \begin{figure}[t]
	\centering
	\includegraphics{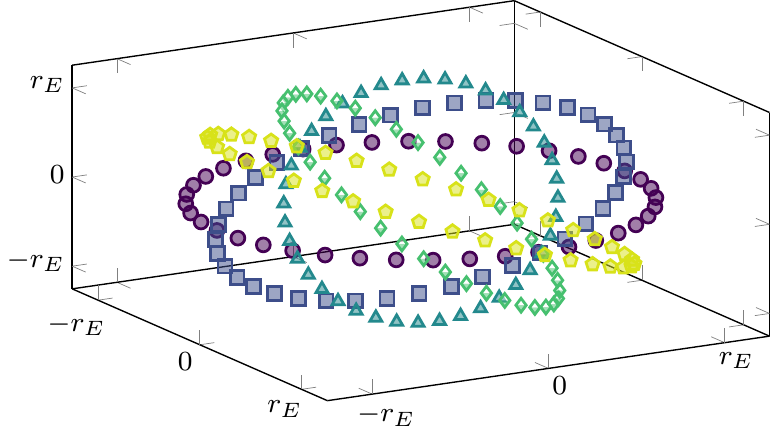}
	\caption{Walker $\delta$ constellation with $N=200$ satellites in $M=5$ orbital planes with altitudes starting at $1000$~km; the Earth radius is $r_E=6371$~km.}
	\label{fig:constellation}
\end{figure}

\subsection{Antennae placement}
The attitude determination and control subsystem of \mbox{CubeSats} is often specified to be 3-axis, stabilized with the yaw axis (x-axis) pointing towards the zenith, the z-axis (pitch) aligned to the orbit angular momentum (i.e., perpendicular to the orbit plane), and the y-axis (roll) aligned to the satellite velocity vector.

Although a set of coordinated small satellites have similar functionality as a \textit{big} satellite, there are practical constraints in the design of each CubeSat in terms of energy, weight and processing. 
Some of these constraints are related to the cube structure itself. For instance, the position of the antennas is rarely free due to the satellite geometry and the placement of other subsystems like thrusters, payload, and heat shielding. Furthermore, even when the inter-plane ISL is implemented, a practical mission will typically prioritize the stability of the GSL and the intra-plane ISL. Under these premises, the GSL antennas will be pointing towards the Earth's center, in the yaw axis, with a dedicated modem
. The intra-plane ISL antennas are deployed in both sides of the roll axis, and two intra-plane transceivers are required to ensure two-way communication within an orbital plane. The pitch axis is then left for the inter-plane ISLs antennas and, depending on weight restrictions, one or two transceivers can be placed for this connectivity type. Both cases, one and two modems, are considered in this paper.



\subsection{Link budget and power adaptation} \label{sec:poweradap}
For the sake of notation simplicity, we skip the time dependence $t$ in the following. At any given time, the received SNR at satellite $j$ from satellite $i\neq j$ is written as
\begin{equation}
    \text{SNR}(i,j) = \frac{P_t G_t G_r}{k T_s R L_p(i,j)}
\end{equation}
where $P_t$ is the transmission power; $G_t$ and $G_r$ are the transmit and receive antenna gains, respectively; $k$ is Boltzmann's constant; $T_s$ is the system noise temperature; $R$ is the data rate in the radio link; and $L_p(i,j)$ is the free-space propagation path loss between satellites $i$ and $j$. The latter is given as
\begin{equation}
    L_p(i,j) = \left( \frac{4\pi l(i,j) f}{c} \right) ^2 \label{eq:pathloss}
\end{equation} 
where $l(i,j)$ is the line-of-sight distance (or slant range) between satellites $i$ and $j$, $f$ is the transmission frequency, and $c$ is the light speed.

\begin{prop}
The slant range between neighboring satellites $i$ and $j$ in orbital plane $a=\mathtt{p}(i)=\mathtt{p}(j)$ is given by
\begin{align}
    l_\text{intra}(a) &=\min \left\{l(i,j) \mid a=\mathtt{p}(i)=\mathtt{p}(j)\right\}\nonumber\\
    &= 2(r_E+h_a)\cos\frac{\pi}{N_a}\tan\frac{\pi}{N_a} 
    \label{eq:distance_intra}
    \end{align}
where $r_E$ is the radius of the Earth. 

The slant range between satellites $i$ and $j$ in orbital planes $a=\mathtt{p}(i)\neq b=\mathtt{p}(j)$, respectively, is given by
\begin{align}
 \label{eq:distance}
l(i,j)& = 
\left[ (h_a+r_E)^2 + (h_b+r_E)^2 \right.\nonumber\\
& \left. -2(h_a+r_E) (h_b+r_E)\cos\theta_{a,i}\cos\theta_{b,j} \right.\nonumber
\\ &\left. -2(h_a+r_E) (h_b+r_E)\cos(\epsilon_a-\epsilon_b)\sin\theta_{a,i}\sin\theta_{b,j} \right] ^{1/2} \; .
\end{align}
\end{prop}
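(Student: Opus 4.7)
The plan is to prove both formulas by introducing a Cartesian frame centered at the Earth's center and writing each satellite's position vector explicitly, then invoking the law of cosines in the form $l(i,j)^2=\|\vec p_i\|^2+\|\vec p_j\|^2-2\,\vec p_i\cdot\vec p_j$.

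For the intra-plane case, I would first observe that all satellites in plane $a$ lie on a circle of radius $r_E+h_a$ and are evenly distributed, so they form the vertices of a regular $N_a$-gon. The distance between any two distinct satellites is strictly increasing in their angular separation up to $\pi$ and decreasing beyond, so the minimum is attained by a pair of neighbors, whose separation is $2\pi/N_a$. Applying the chord-length formula gives $l_\text{intra}(a)=2(r_E+h_a)\sin(\pi/N_a)$, and a one-line trigonometric rewriting $\sin x=\cos x\tan x$ delivers the stated form $2(r_E+h_a)\cos(\pi/N_a)\tan(\pi/N_a)$.

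For the inter-plane case, I would align the $x$-axis of the inertial frame with the line of intersection of planes $a$ and $b$, so that after this choice the orientation of each plane is captured by a single angle $\epsilon_a$ (respectively $\epsilon_b$), and the argument of latitude $\theta_{a,i}$ can be measured from that common axis. Then
\[
\vec p_i=(r_E+h_a)\bigl(\cos\theta_{a,i},\ \sin\theta_{a,i}\cos\epsilon_a,\ \sin\theta_{a,i}\sin\epsilon_a\bigr),
\]
and the analogous expression holds for $\vec p_j$ with subscript $b$. A direct dot-product computation collapses the cross terms via $\cos\epsilon_a\cos\epsilon_b+\sin\epsilon_a\sin\epsilon_b=\cos(\epsilon_a-\epsilon_b)$, giving
\[
\vec p_i\cdot\vec p_j=(r_E+h_a)(r_E+h_b)\bigl[\cos\theta_{a,i}\cos\theta_{b,j}+\cos(\epsilon_a-\epsilon_b)\sin\theta_{a,i}\sin\theta_{b,j}\bigr].
\]
Substituting into the law-of-cosines identity and taking the square root produces exactly \eqref{eq:distance}.

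The only step that needs genuine care is the choice of reference frame: specifically, justifying that one can simultaneously align the $x$-axis with the nodal line of \emph{both} planes so that the dihedral angle between them is precisely $\epsilon_a-\epsilon_b$. Once that geometric setup is fixed, the rest of the argument is a mechanical vector-algebra calculation and the trigonometric identity above. The intra-plane claim then follows as the special case $\epsilon_a=\epsilon_b$ with the minimizing angular gap plugged in, which could alternatively be presented as a sanity check on the inter-plane formula.
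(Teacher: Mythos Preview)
Your proposal is correct and follows essentially the same route as the paper: write each satellite's Cartesian position vector in an Earth-centered frame with the $x$-axis along a common reference direction, then compute the Euclidean distance, using the addition formula for cosine to collapse the $\epsilon$-dependent terms; the intra-plane formula is likewise obtained as the chord length between adjacent vertices of a regular $N_a$-gon. Your treatment is in fact a bit more explicit than the paper's about why the dihedral angle appears as $\epsilon_a-\epsilon_b$, which the paper simply absorbs into its choice of the equinox as the $x$-axis.
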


\begin{proof}
Equation~\eqref{eq:distance_intra} is derived from a circular orbit with evenly distributed satellites, by calculating the distance between two points in a circle. To calculate the distance between spacecrafts in different orbital planes, as in~\eqref{eq:distance}, let $T_a$ denote the orbital period of plane $a$ and $\theta_{a,i}(t) = \left(2\pi t/T_a\right) + \left(2\pi i/N_a\right)$ denote the orbital angle of satellite $i$ in plane $a$ at time $t$. Notice the notation emphasis here regarding the time dependence. The equatorial coordinates of the satellite are first expressed in terms of the ecliptic coordinates in the ecliptic plane, $(\text{x}_{ecl}, \text{y}_{ecl}, \text{z}_{ecl})$, with the x-axis aligned toward the equinox. After some calculations the equatorial coordinates are written
\begin{IEEEeqnarray}{rCl}
    \text{x}_{a,i}(t) &=& (h_a+r_E) \cos\theta_{a,i}(t) \IEEEyesnumber\IEEEyessubnumber\\
    \text{y}_{a,i}(t) &=& (h_a+r_E) \cos\epsilon_i \sin \theta_{a,i}(t)\IEEEyessubnumber \\
    \text{z}_{a,i}(t) &=& (h_a+r_E) \sin\epsilon_i \sin \theta_{a,i}(t)\IEEEyessubnumber
\end{IEEEeqnarray}
Then, the euclidean distance is calculated to obtain (\ref{eq:distance}).
\end{proof}

The achievable ISL data rate is constrained by the usual link budget parameters including, among others: modulation and coding schemes, link distance, frequency, RF power, antenna gains, noise temperature, and equipment losses. We study a scenario in which the spacecrafts aim to transmit at a minimum data rate $R$ in the ISL. We assume that the link budget parameters listed above remain constant, except for $P_t$. Mathematically, \ilm{the minimum $P_t$ needed to achieve $R$ is}
\begin{align}
    &\min_{P_t} \left\{P_t\mid R{\leq} B\log_2 \left(1+\text{SNR}(i,j)\right)\right\}\nonumber\\
    &=\frac{\left(2^{R/B}-1\right) k T_s R}{ G_t G_r} \left(\frac{4 \pi  {l(i,j)} f}{c}\right)^2. 
\end{align}

The power adaptation policy together with the constellation geometry determines the link opportunities, or contact times, defined as the time a pair of satellites are within the communication range. Although not considered in this paper, the contact time is also influenced by the directivity and the radiation pattern of the antennae (i.e., if both transmitting and receiving satellite are pointing to each other), but we assume omnidirectional antennae throughout the paper and leave the antenna characterization for future work.  

As power adaptation policy, we define two levels: low power $P_\ell$ and high power $P_h$. The maximum distances at which two satellites $i$ and $j$ can communicate at the low power and high power level at a minimum rate $R$ are design parameters, given by the hardware and energy constraints of the spatial mission. Through this study, $P_\ell$ is set to be the minimum power to achieve a theoretical channel capacity that is higher than $R$ with a bandwidth $B$ and within a distance \ilm{\mbox{$l_\ell = \eta \cdot l_\text{intra}(m)$}, where $\eta$ is the design parameter.
$P_h$ is defined analogously but within a distance $l_{h} = 2 \eta \cdot l_\text{intra}(m)=2 l_\ell$.}
The inter-plane ISL transmission power is then selected as
\begin{equation}
    P_t=\begin{cases}
    P_\ell, & \text{if } l(i,j)\leq l_\ell\\
    P_h, & \text{if }l_\ell<l(i,j)\leq l_{h}.
    \end{cases}
\end{equation}

\section{Matching problem with one transceiver} \label{sec:1modem}
In this section we define the inter-plane ISL matching problem with one transceiver and propose two approaches to solve it. Then, the matching problem and our approaches are extended to the case with two transceivers in Section~\ref{sec:2modems}. 

The satellite network can be represented as a time-varying graph in which the inter-plane ISL link opportunities are short. The dynamics is such that satellites may perform early handover (i.e., before reaching the maximum distance $l_h$) to increase their link budget and, hence, transmit at a higher data rate and/or at a lower power when compared to the previous inter-plane link. 
Therefore, the actual link duration (or contact time) between two satellites may be shorter than the link opportunity. However, handover has an inherent signaling, delay, and processing overhead. Consequently, excessively frequent handovers must be avoided, for which a minimum period between handovers $\ilm{T_\text{ho}}$ can be selected. Then, the matching problem can be solved once every $\ilm{T_\text{ho}}$ by taking samples (snapshots) of the constellation. Hence, $\ilm{T_\text{ho}}$ is denoted as the sampling period. Finding an optimal value for $\ilm{T_\text{ho}}$ is out of the scope of this paper. Instead, we 
set a sufficiently short $\ilm{T_\text{ho}}$ to consider a static geometry of the constellation during this period. 

Inter-plane antennas are placed in the $Y+$ and $Y-$ sides of the spacecraft, so the coverage volume with one transceiver is assumed to be the same as the implementation with two transceivers, but only one simultaneous inter-plane connection is possible. For $M\in\mathbb{Z}_+$ orbital planes, this is the matching problem in a $M$-partite graph described in the following. 

Let $G(V,E)$ be a graph with set of vertices $V$ and set of edges $E$. Graph $G(V,E)$ is $M$-partite if $V$ can be divided into $M$ disjoint subsets $V_m$; hence, $V=\bigcup_{m=1}^M V_m$, \mbox{$\bigcap_{m=1}^M V_m=\emptyset$}. Each subset $V_m$ represents one orbital plane, so $|V_m|=N_m$ and each edge in $E$ has endpoints $(i,j)$, given $\{i,j\}\in V$ and $j\notin \mathcal{P}(i)$. These endpoints are usually called \textit{agents} and \textit{tasks}. Any agent can be assigned to perform any task, incurring some cost that may vary depending on the agent-task assignment. The cost of using edge $(i,j)\in E$ is denoted as $w_{ij}$.

At any given $s$, $w_{ij}$ is a function of the the power level $P_{ij}$ required for communication at the edge $(i,j)$ between satellites $i\in V_a$ and $j\in V_b$ (i.e., from planes $a$ and $b$, respectively). We write
\begin{equation}
w_{ij} = \frac{P_{ij}}{ \mathbbm{1}(Q>0)} 
\end{equation}
\noindent where $\mathbbm{1}(Q>0)=1$ if the transmission buffer is not empty, i.e., $Q > 0$ and $\mathbbm{1}(Q>0)=0$ otherwise. 

Let $W$ be the symmetric matrix with all costs $w_{ij}$. Matrix $W$ is formed by block matrices $W_{a,b}$, which contain the costs $w_{ij}$ for all $i\in V_a$ and $j\in V_b$. Naturally, $(i,j)\notin E$ if $j\in\mathcal{P}(i)$; hence, we set $w_{ij}=\infty$ in these cases, which gives
\[
W=\begin{bmatrix}
\infty & W_{1,2} & \cdots & W_{1,M}\\
W_{2,1} & \infty & \cdots & W_{2,M}\\
\vdots & \vdots & \ddots & \vdots\\
W_{M,1} & W_{M,2} & \cdots & \infty\\
\end{bmatrix}
\]

The goal is to assign exactly one agent to each task and exactly one task to each agent in such a way that the total cost of the assignment is minimized. Let $A\subseteq E$ be an assignment  on $G(V,E)$. The assignment $A$ that results in the minimum cost among all the possible assignments is optimal. 
The matching problem is mathematically written
\begin{equation}
\begin{aligned}
& \underset{}{\text{min}} & & \sum_{i=1}^N\sum_{\substack{j=1 \\ j \notin \mathcal{P}(i)}}^N w_{ij}x_{ij} \\
& \text{subject to } & & \sum_{\substack{j=1 \\ j \notin \mathcal{P}(i)}}^N x_{ij}= 1 \;\; \forall i \\
& & & x_{ij} \in \{0,1\} 
\end{aligned} \label{eq:mplanes}
\end{equation}

\noindent where $x_{ij}=1$ indicates a match (i.e., $(i,j)\in A$) and $x_{ij}=0$ indicates no match between spacecrafts.

\subsection{Independent experiments matching}
\label{sec:indep_experiments}

This is the classical static approach, in which the underlying graph is assumed to be time-invariant. Hence, the matching problem is solved at each time instant $t$ without taking into consideration past decisions. Therefore, this solution minimizes the immediate cost at each $t$ independently.

The basic Hungarian algorithm \cite{Munkres1957} gives the optimal solution in polynomial time ($\mathcal{O}(N^3)$) for the independent experiments matching problem in~\eqref{eq:mplanes} for $M=2$. Nevertheless, the asymptotic complexity of the Hungarian algorithm is restrictive in constellations where the number of satellites is large and extensions are needed for the cases where $M>2$. In these cases, we propose the use of the heuristic Algorithm~\ref{alg:suboptimal}, which greatly reduces the computational complexity with respect to the Hungarian algorithm and provides a near-optimal solution for the independent experiments matching. Its operation is summarized as follows. At each time instant $t$, the weights $w_{ij}$ are updated, and the strategy is to recursively add edges to the set of assignments $A$ by finding the edge $(i,j)$ with the smallest weight. Then, the rows and columns with the indices of the new pair are deleted from $W$. 




\begin{algorithm} [t]
	\centering
	\caption{Heuristic algorithm for independent experiments matching with a single transceiver.}
	\begin{algorithmic}[1] 
	\renewcommand{\algorithmicrequire}{\textbf{Input:}}
		\renewcommand{\algorithmicensure}{\textbf{Output:}}
		\REQUIRE $W$ is matrix of costs
		\STATE $x_{ij}=0$ for all $i, j$ 
		\WHILE { $\exists \min W<\infty$}
		\STATE $i^*,j^*  \longleftarrow \underset{i,j}{\argmin} W  $
		\STATE $x_{i^* j^*}=1$ (associate satellite $i^*$ to satellite $j^*$)
		\STATE Delete rows and columns with indices $i^*$ and $j^*$
		\ENDWHILE
		
	\end{algorithmic}  \label{alg:suboptimal}
\end{algorithm}

\subsection{Markovian matching: maximization of the contact time}
\label{sec:markovian_matching}
The weights $w_{ij}$ change with the movement of the constellation, which is predictable. Given a sufficiently short $\ilm{T_\text{ho}}$, the movement of the constellation from $t$ to $t+\ilm{T_\text{ho}}$ is smooth and relatively slow. Therefore, it is likely that most of the pairs assigned at $t$ are still near-optimal choices at $t+\ilm{T_\text{ho}}$. Consequently, the slow and predictable time evolution of the geometry between these time instants can be exploited to increase the contact time (and reduce the handovers) and reduce the computational complexity of the matching algorithm. For this, we formulate a dynamic assignment problem in which the previous state of the system is considered. 

Let $\left\{X_s^{(i,j)}\right\}_{s}$ be the stochastic process with time index $s$ and state space $\{0,1\}$ that defines the inter-plane matching between satellites $i$ and $j$ at time index $s$. That is, $\Pr\left[X_s^{(i,j)}=1\right]=\Pr\left[(i,j)\in A\right]$ and vice versa. We denote $x_{i j}(s)$ as the event of a match between $i$ and $j$ at time index $s$; hence, $x_{i j}(s)$ is analogous to $x_{i j}$ for experiment $s$.

Algorithm~\ref{alg:suboptimal} is extended to maintain existing satellite pairs for as long as $l(i,j)\leq l_{h}$. For this, we define
\begin{equation}
    \Pr\left[X_s^{(i,j)}=1\mid X_{s-1}^{(i,j)}=1, w_{ij}<\infty\right]=1;
\end{equation}
hence, these pairs are eliminated from $W$. Only then, new satellite pairs are created according to Algorithm~\ref{alg:suboptimal}. Therefore, $\left\{X_s^{(i,j)}\right\}_{s}$ is a discrete-time Markov chain.
Algorithm~\ref{alg:suboptimallinkduration} summarizes the Markovian approach, whose computational complexity is expected to be considerably lower than that of the independent experiments matching.


\begin{algorithm} [t]
	\centering
	\caption{Markovian algorithm for a single transceiver}
	\begin{algorithmic}[1] 
			\renewcommand{\algorithmicrequire}{\textbf{Input:}}
		\renewcommand{\algorithmicensure}{\textbf{Output:}}
		\REQUIRE $W$ is the matrix of costs at time index $s$
		\REQUIRE $x_{ij}(s-1)$ are the matching indicators at $s-1$, 
		\STATE $x_{ij}(s)=0$ for all $i,j$ 
        \FOR {$x_{ij}(s-1)=1$}
            \IF {$w_{i j} < \infty $ }
            \STATE {\{\textit{The pair of satellites is still reachable}\}}
                \STATE $x_{i j}(s)=1$ (maintain association)
                \STATE Delete rows and columns with indices $i$ and $j$
            \ENDIF
        \ENDFOR
		\WHILE {$\exists \min W<\infty$}
		\STATE Find $i^*,j^*  \longleftarrow \underset{i,j}{\argmin} W$
		\STATE $x_{i^*j^*}(s)=1$ (associate satellite $i^*$ to satellite $j^*$) 
		\STATE Delete rows and columns with indices $i^*$ and $j^*$
		\ENDWHILE
	\end{algorithmic}  \label{alg:suboptimallinkduration}
\end{algorithm}


\section{Matching problem with two transceivers} \label{sec:2modems}
In this section, we extend the algorithm and the formulation for the independent experiments matching introduced in Section~\ref{sec:indep_experiments} to the case in which each satellite is equipped with two inter-plane transceivers. The extension of the following to the Markovian approach described in Section~\ref{sec:markovian_matching} is straightforward.

For each satellite $i$, its orbital plane defines two possibilities for the relative position with respect to satellite $j\notin\mathcal{P}(i)$. That is, $j$ is either to the $Y+$ side or to the $Y-$ side of $i$. Since the inter-plane antennas are placed in opposite sides of the satellite, at most one ISL can be maintained at each side $Y+$ and $Y-$. Building on this, the matching problem becomes
\begin{equation}
\begin{aligned}
& \underset{}{\text{min}} & &\sum_{i=1}^{N}  \sum\limits_{\substack{j=1 \\ j \notin \mathcal{P}(i)}}^{N}w_{ij}x_{ij}^{Y+} +w_{ij}x_{ij}^{Y-}\\
& \text{subject to} & &\sum\limits_{\substack{j=1 \\  j \notin \mathcal{P}(i)}}^{N}x_{ij}^{Y+} = 1 \;\; \forall i \\
& & &\sum\limits_{\substack{j=1 \\ j \notin \mathcal{P}(i)}}^{N}x_{ij}^{Y-} = 1 \;\; \forall i \\
& & &x_{ij}^d \in \{0,1\}, \; d \in\{Y+,Y-\}
\end{aligned} \label{eq:2modems}
\end{equation}

Algorithm~\ref{alg:2modems} solves the problem in equation~\eqref{eq:2modems}, using the same principles introduced for the case with one transceiver. The extension to two transceivers is possible by allowing one matching at each $Y+$ and $Y-$ (indicated by $x_{ij}^d$, where $d\in\{Y+,Y-\}$). Satellite $i$ is removed from $W$ only when a matching is made at each side. 

\begin{algorithm} [t]
	\centering
	\caption{Algorithm for two-transceivers}
	\begin{algorithmic}[1] 
			\renewcommand{\algorithmicrequire}{\textbf{Input:}}
		\renewcommand{\algorithmicensure}{\textbf{Output:}}
		\REQUIRE $W$ is the matrix of costs

		\WHILE {$\exists\min W < \infty$}
		\STATE Find $i^*,j^*  \longleftarrow \underset{i,j}{\argmin} W$
        \STATE Find 
        $d$ for $j$ with respect to $i$ and vice versa
		\IF {$x_{i^*j^*}^{d} == 0 \;\&\&\; x_{j^*i^*}^{d} == 0$}
    		\STATE $x_{i^*j^*}^{d}=1$ and $x_{j^*i^*}^{d}=1$
            \STATE $w_{ij} = \infty$
    		\IF {$\sum_k x_{i^*k}^{Y+} + x_{i^*k}^{Y-} == 2$}
    		    \STATE Delete the row and column with index $i^*$
    		\ENDIF
    		\IF {$\sum_k x_{j^*k}^{Y+} + x_{j^*k}^{Y-} == 2$}
    		    \STATE Delete the row and column with index $j^*$
    		\ENDIF
		\ENDIF
		\ENDWHILE
		
	\end{algorithmic}  \label{alg:2modems}
\end{algorithm}

\section{Results}
\label{sec:results}
This section presents the most relevant results derived from our analysis. 
A Walker $\delta$ constellation, such as the one illustrated in Fig.~\ref{fig:constellation}, and the model from Section~\ref{sec:systemmodel} are considered. The default parameters involved in the geometry, link budget and power adaptation are as follows. A total of $N$ satellites are distributed in $M$ orbital planes deployed at heights $h_m = 900 + 100m$~km with $N_m=N/M$ for $m = 1,2,\dotsc,M$. The inclination of plane $m$ is $\epsilon_m = 2 \pi(m-1)/M$. 
The intra-plane distance at the highest orbital plane is used for power adaptation. Unless otherwise specified, $\eta = 1$, such that $l_\ell=l_\text{intra}(\ilm{m})$ and $l_{h} = 2 l_\text{intra}(\ilm{m})$. 
The buffer of the satellites is never empty
; hence, $\mathbbm{1}(Q>0)=1$ is constant and all pairs in the coverage volume are considered for the matching.

Simulators for the studied algorithms have been developed in \ilm{Python 3. Simulations were run on a PC with Ubuntu 18.04.2 LTS ($64$~bit), an Intel Core i7-7820HQ CPU, $2.9$~GHz, and $16$~GB RAM. The clock precision of this platform is $10^{-7}$~s.} 
No other processes with a relevant CPU usage were run during the execution of our code. 

The constellation is simulated for at least five orbital periods of the lowest orbital plane ($h=1000$~km) and, at least, $10\,000$ matching problems are solved. \ilm{Given the  orbital period for the lowest orbital plane at $h_1=1000$~km is $T_1 = 6298$~s, the sampling period is $\ilm{T_\text{ho}}=5 T_1/ 10\,000=3.41$~s; hence, it is safe to consider the constellation is static throughout a single sampling period.}

Given the uniformity of the constellation, the differences in terms of contact time between the heuristic algorithm for the independent experiments and the Markovian algorithm are insignificant, with a slight improvement by using the Markovian one. The differences in total cost (i.e., network energy consumption) are also minor. Where the algorithms differ is in the execution time per matching, as plotted in \figurename~\ref{fig:execution_time} for the case with one transceiver and $N_m=40$. In \figurename~\ref{fig:execution_time} (a), for $M=2$, the execution time of the optimal Hungarian algorithm is also included. For $M=5$, plotted in  \figurename~\ref{fig:execution_time} (b), only the heuristic (independent experiments) and the Markovian solution are compared. The Markovian algorithm reduces the time needed to solve the matching up to $1000\times$ and $10\times$ 
with respect to the optimal solution,
and to the heuristic solution, respectively.
Other uniform geometries have been simulated with similar conclusions. 

\begin{figure}[t]
	\centering
	\subfloat[]{\includegraphics{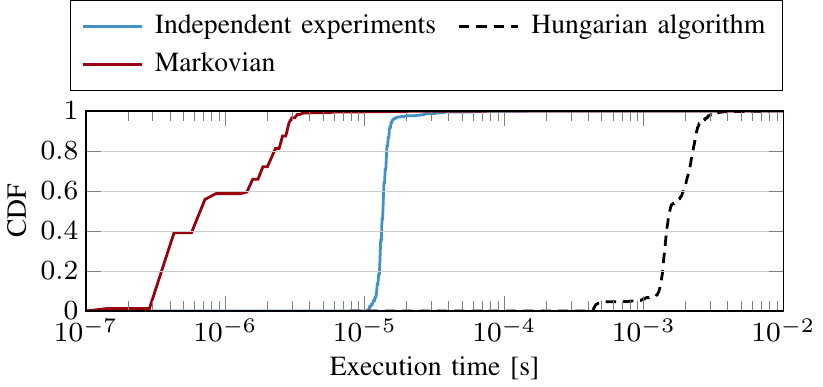}}\\
	\subfloat[]{\includegraphics{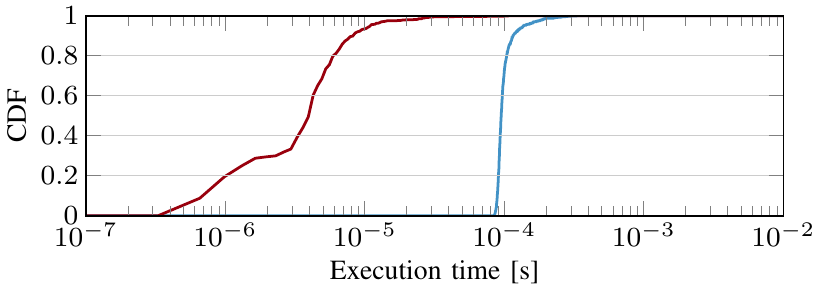}}
	\caption{Comparison of the execution time per matching with (a) $M=2$ and (b) $M=5$ using the Hungarian algorithm (only for $M=2$), the independent experiments, and the Markovian approaches for a single transceiver; $N_m=40$ for all $m$.}
	\label{fig:execution_time}
\end{figure}

\figurename~\ref{fig:avg_power} shows the relative average power consumption \mbox{$\bar{P} = \bar{P}_t/P_\ell$} when applying Algorithm~\ref{alg:2modems} to problem (\ref{eq:2modems}) in a constellation of $M=4,5,\dotsc,8$ orbital planes with $N_m=40$ for all $m$. The power is averaged over time and over the number of pairs (established inter-plane ISLs). The cases with $\eta = 1$ and $\eta = 2$ are plotted for the independent experiments and the Markovian algorithms. 
As expected, the relative average power consumption is higher when $\eta = 2$, and consequently more pairs are candidates to establish a communication link. The Markovian solution results in a higher relative power as compared to the independent experiments, showing the price to pay in energy consumption for reducing the matching complexity and increasing the link duration, therefore reducing the handover signaling. 

\begin{figure}[t]
	\centering
	\includegraphics{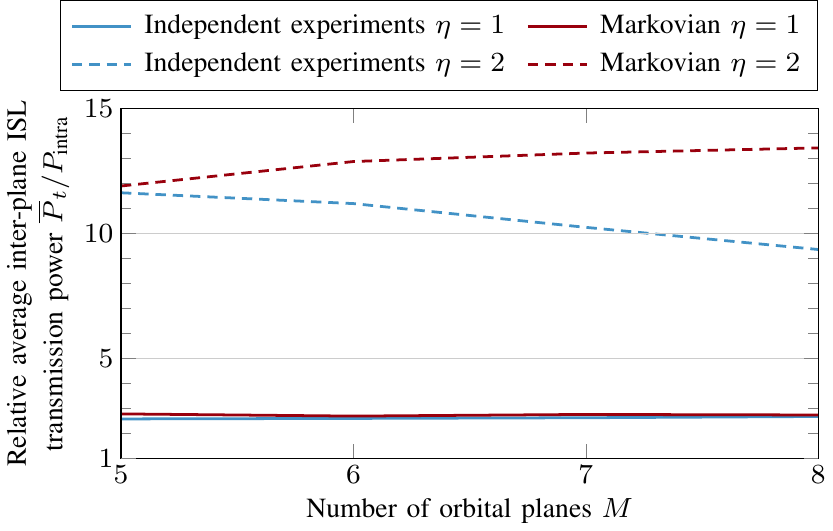}
	\caption{Relative average inter-plane ISL transmission power $\overline{P}_t/P_\text{intra}$ per satellite pair versus number of planes with the independent experiments and Markovian approaches; single transceiver.}
	\label{fig:avg_power}
\end{figure}

\figurename~\ref{fig:1vs2} shows the average number of satellite pairs established with $N_m=\{40,60\}$ and with one and two transceivers. As expected, the number of pairs increases with the $M$, $N_m$, and the number of transceivers. Nevertheless, the number of pairs with two transceivers is less than twice the number of pairs with one transceiver; this is the upper bound introduced by the geometry of the constellation. Therefore, the throughput in the inter-plane ISL of a constellation is not doubled by adding a second transceiver.

\section{Conclusions} \label{sec:conclusions}

We have addressed the inter-plane ISL in a LEO constellation of satellites using unicast communication. The constellation is modelled like a dynamic graph, in which vertices are satellites and edges are the communication links. The case in which the CubeSat is equipped with a single transceiver for this connectivity type is first studied, with a heuristic algorithm and a Markovian solution, the latter for maximizing the link duration. Then, the case with two transceivers is analyzed, considering the relative position of the planes. The cost of assigning a pair of spacecrafts is abstracted in our model, although the examples illustrate the minimization of the network energy consumption under a power adaptation scheme. The simulation results show that the Markovian solution sharply reduces the computational complexity of the matching when compared to the baseline algorithm. The algorithms are periodically executed with a sampling period sufficiently small, so it is straightforward to address changes in the topology due to malfunctioning spacecraft. 

\begin{figure}[t]
	\centering
	\includegraphics{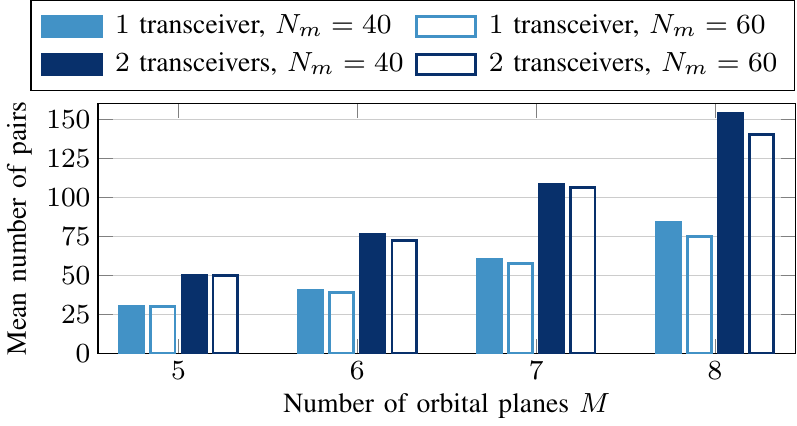}
	\caption{Mean number of satellite pairs versus number of planes and satellites per plane with one and two transceivers. }
	\label{fig:1vs2}
\end{figure}

This paper provides benchmark results for understanding the limits in the inter-plane ISL connectivity. The network-wise solution provided by our algorithms can be the basis to define practical implementations of distributed protocols that can be autonomously computed in each satellite. Another extension is the characterization of directional antennas and the pointing in the inter-plane ISL. 




\section*{Acknowledgment}

This work has been in part supported by the European Research Council (Horizon 2020 ERC Consolidator Grant Nr. 648382 WILLOW). 

\bibliographystyle{IEEEtran}
\bibliography{ISLassignment}
\end{document}